\theoremstyle{definition} %设置定理环境内字体不是斜体。
\newtheorem{thm}{Theorem}[section]
\newtheorem{lem}[thm]{Lemma}
\newtheorem{prop}[thm]{Proposition}
\newtheorem{defn}[thm]{Definition}
\def\E[#1]{\exists^{#1}}
\def\A[#1]{\forall^{#1}}
\title{A Logic that Captures $\beta$P on Ordered Structures}
\author{Kexu WANG,\\ email: wangkexuphy@163.com 
    \and Xishun ZHAO \\ email: hsszxs@mail.sysu.edu.cn\\
Institute of Logic and Cognition,\\
 Department of Philosophy, Sun Yat-sen University}
\date{}
\begin{document}
\maketitle
\thispagestyle{empty}
\begin{abstract}
We extend the inflationary fixed-point logic, IFP, with a new kind of second-order quantifiers which have (poly-)logarithmic bounds. We prove that on ordered structures the new logic $\exists^{\log^{\omega}}\text{IFP}$ captures the limited nondeterminism class $\beta\text{P}$. In order to study its expressive power, we also design a new version of Ehrenfeucht-Fra\"{i}ss\'{e} game for this logic and show that our capturing result will not hold in the general case, i.e., on all the finite structures.
\\
\textbf{Keywords}: limited nondeterminism, $\beta$P, descriptive complexity, second-order quantifiers with bounds.
\end{abstract}

\section{Introduction}
In descriptive complexity theory, it is the most interesting task to find a logical characterization of a complexity class. But why do we need logics to characterize (or capture) complexity classes?
\begin{quote}
\emph{Logics speak directly about graphs and structures, whereas most other formalisms operate on encodings of structures by strings or terms. Hence a logical characterization of a complexity class is representation-independent.} 

\raggedleft by Martin Grohe \cite{grohe2011polynomial}
\end{quote}
We know in graph theory or database theory, more essentially we care about \emph{graph properties} (or \emph{Boolean queries}), i.e. the properties which do not depend on encoding. A graph property is always closed under isomorphism. This coincides with that the model class of a logic sentence is closed under isomorphism. Descriptive complexity theory intends to consider every logic sentence as a machine and vice versa. Thus every model of a sentence could be associated with an input of a corresponding machine and the logic (actually a class of sentences) would be related to a complexity class (actually a class of Turing machines). The precise definition will be given in \ref{sec: logic chara}.

In this paper, let's turn to some \emph{limited} (or \emph{bounded}) \emph{nondeterminism} classes, which are included in NP while including P. The idea of limited nondeterminism was first defined by Kintala et al \cite{kintala1984refining}. Then in \cite{cai1997amount} Cai et al discussed a more general case, i.e. the ``Guess-then-Check'' model.
\begin{defn}\label{defn: GC}\cite{cai1997amount}

Let $s: \mathbb N \mapsto \mathbb N$ and $\mathcal C$ be a complexity class. A language $L$ is in the class $GC(s, \mathcal C)$ if there is a language $L^\prime \in \mathcal C$ together with an integer $c>0$ such that for any string $u$, $u \in L$ if and only if $\exists v \in \{0,1\}^*$, $|v|\leq c\cdot s(|u|)$, and $u\# v\in L^\prime$. 
\end{defn}
Naturally NP = $\bigcup_{i \in \mathbb N}GC(n^i, \text{P})$. For any sublinear function $f$, let's define
\[\beta_f = GC(f, \text{P})\]
Specially  for $k\in \mathbb N$ we denote $\beta_k = GC(\log^k, \text{P})$ instead of $\beta_{\log^k}$. Let
\[\beta\text{P}= \bigcup_{k\in \mathbb N} \beta_k\]
%There are some interesting problems known in $\beta$P, for example Vapnik-Chervonenkis dimension problem \cite{papadimitriou1996limited}, Tournament dominating set problem \cite{megiddo1988finding}, $\beta_k$-CVP problem \cite{diaz1990classes} and $\beta_k$-SAT problem \cite{goldsmith1996limited}. 

Correspondingly we introduce $\exists^{f}$, the second-order quantifier bounded by $f$. (We call this the \emph{f-bounded quantifier}.)  The semantics is straightforward. For any formula $\phi$, any relation variable $X$ and any structure $\mathscr A$,
\begin{align*}
\mathscr A \vDash \exists^f X\phi \Longleftrightarrow &\text{ there is a subset } S\subseteq A^{arity(X)}\text{ with }|S|\leq f(|A|),\\
&\text{ such that } \mathscr A \vDash \phi[\frac{X}{S}]
\end{align*}

We care more about the second-order quantifiers with a logarithmic bound, written as $\exists^{\log^k}$. We call these \emph{log-quantifiers}. The new logic $\exists^{\log^\omega}$IFP is obtained by extending the \emph{inflationary fixed-point logic} IFP with all the log-quantifiers. The main theorem will show that $\exists^{\log^\omega}$IFP captures $\beta$P on ordered structures. An \emph{ordered structure} is a structure whose domain has a built-in linear order. One can notice that the log-quantifiers will act as the part ``$\exists v \in \{0,1\}^*$, $|v|\leq c\cdot s(|u|)$'' in definition \ref{defn: GC}. The log-quantifiers ``guess''  and then the IFP formula will ``check''.

Our characterization is a natural extension of the famous \emph{Fagin's theorem} and \emph{Immerman-Vardi's theorem}. R. Fagin \cite{fagin1974generalized} showed that NP is captured by the existential second-order logic $\Sigma^1_1$, which consists of formulas in the form
\[\exists X_1 \dotsc \exists X_m \phi\]
where $\phi$ is first order and $X_1\dotsc X_m$ are relation variables. As a corollary of Fagin's theorem, every layer of the polynomial time hierarchy, PH, is captured by a layer of the second-order logic \cite{ebbinghaus2005finite}. The fundamental result of capturing P is Immerman-Vardi's theorem \cite{immerman1982relational,vardi1982complexity}. It shows that IFP captures P on ordered structures. 

The restriction on ordered structures is vital. Actually so far we do not know what logic can capture P without a built-in order. Logics are free from encoding, but when we intend to simulate a Turing machine with a logic sentence, it cannot be helped using a linear order to encode graphs or structures. This is related to a more fundamental and sophisticated problem, \emph{canonization} (or \emph{canonical labeling}) of graphs (or structures). A canonization is an algorithm which returns the unique labeling of a graph no matter how we label the vertices of the graph initially. The P-computable canonizations do exist on some certain classes of graphs, for instance, trees \cite{kobler2006graph}, planar graphs \cite{kobler2006graph}, graphs of bounded treewidth \cite{bodlaender1990polynomial}, graphs of bounded degree
\cite{babai1983canonical}. 
%Theoritically, when there is a P-canonization on graph class $\mathcal K$, there is a logic capturing P on $\mathcal K$, because we can simply introduce a new symbol $<$ as the canonical order and use $<$ as a logic symbol just as ``$=$'', and then apply Immerman-Vardi's theorem. It is tricky and unnatural. 
Researchers are also interested in using logics to define a canonization. There are IFP-definable canonizations on cycles \cite{ebbinghaus2005finite}, grids \cite{ebbinghaus2005finite} or 3-connected planar graphs \cite{grohe1998fixed}. That means on these classes IFP can provide a canonical linear order and captures P. An important approach is to extend IFP to capture P on some more general classes. For example, IFP with counting, denoted by IFP+$\#$, on trees \cite{immerman1990describing}, planar graphs \cite{grohe1998fixed}, graphs of bounded treewidth \cite{Grohe1999}, graphs of bounded rank width \cite{grohe2019canonisation}. 

Neither IFP nor IFP+$\#$ can capture P in the most general case, i.e., on all the finite structures. They were originally proved via the game method. Alongside this notion we will design a new Ehrenfeucht-Fra\"{i}ss\'{e} game and prove $\exists^{\log^\omega}$IFP fails to capture $\beta$P in the most general case, too.

%\begin{defn}\label{defn: NPf}
%\ 
%
%Let $\Sigma$ be an alphabet. $\Sigma^*$ are all the strings generated from $\Sigma$. Given a function $f: \mathbb N \longrightarrow \mathbb N$.
%
%A language $L \subseteq \Sigma^*$ is in NP[$f$] if there exists a PTIME-bounded Turing machine M such that for any $w\in \Sigma^*$,
%\[w \in L \Longleftrightarrow \exists u \in \Sigma^{f(|w|)} \text{ M accepts }w\#u\]
%where $\Sigma^{f(|w|)}$ is the set of $\Sigma^*$-strings of length no more than $f(|w|)$ and $w\#u$ means $w$ concatenated with $u$. 
%\end{defn}
%
%Actually definition \ref{defn: NPf} states a special case of definition \ref{defn: GC}. For any $f$, NP[$f$] = $GC(f, \text{P})$. And naturally NP = $GC(n^{O(1)}, \text{P})$. 
%
%In some papers, NP[$f$] is written as $\beta_f$. Specially we often write NP[$\log^i (n)$] as $\beta_i$ for $i<\omega$. $\beta$P $= \bigcup_{i<w} \text{NP}[\log^i (n)] = \bigcup_{i<w} \beta_i$. (For uniformity we prefer to write $\beta_i$ instead of NP[$\log^i$] in the following context except in noted content.) 

%======================================================
%2.预备知识
\section{Preliminaries}
We assume that the readers are familiar with the basic concepts of computational complexity theory and mathematical logic. A \emph{signature} $\tau$ is a finite class of relation symbols. For convenience, we do not talk about constant symbols and function symbols. $\mathscr L[\tau]$ is the formulas of logic $\mathscr L$ formed with symbols in $\tau$.  A \emph{$\tau$-structure} (or structure over $\tau$) $\mathscr B$ explains the symbols in $\tau$ on a domain $B$. In this paper we only consider \emph{finite} structures, i.e. whose domain is a finite set. STRUC$[\tau]$ is the class of all $\tau$-structures. A \emph{graph} is a structure over signature $\{E\}$ whose domain $V$ is a set of vertices. STRUC[$\tau$]$_<$ is the class of all ordered $\tau$-structures (there is a built-in linear order of whose domain). STRING is the class of all strings. 
Let $\tau_{\text{str}} = \{<, P_0, P_1, P_\#, P_\langle, P_\rangle \}$.  A string $u$ is a structure over $\tau_{\text{str}}$, i.e. 
\[u= (U, <, P_0^u, P_1^u, P_\#^u, P_\langle^u, P_\rangle^u )\]
where
\begin{itemize}
\item $U=\{0, 1,\dotsc, |u|-1\}$
\item $<$ is the natural linear order of $U$
\item $P_0^ui \Longleftrightarrow \text{the i-th bit of }u\text{ is }0$
\item $P_1^ui \Longleftrightarrow \text{the i-th bit of }u\text{ is }1$ 
\item $P_\#^ui \Longleftrightarrow \text{the i-th bit of }u\text{ is }\#$
\item $P_\langle^ui \Longleftrightarrow \text{the i-th bit of }u\text{ is }\langle$  
\item $P_\rangle^ui \Longleftrightarrow \text{the i-th bit of }u\text{ is }\rangle$  
\end{itemize} 
“\#'' is used to separate two concatenated strings, for instance, ``$u\#v$''. ``$\langle$'' and ``$\rangle$'' are used for encoding in definition \ref{defn:Enumerating encoding}. None of the three auxiliary symbols are theoretically necessary and all strings can be represented binarily, i.e. just with 0 and 1. However their attendance makes our proofs much easier. 

A Boolean query $\mathcal Q$ on $\tau$ is a class of structures over the same signature $\tau$, and closed under isomorphism, i.e. for any $\mathscr A$, $\mathscr B\in$ STRUC[$\tau$], if $\mathscr A \simeq \mathscr B$, then,
\[\mathscr A \in \mathcal Q \Longleftrightarrow \mathscr B \in \mathcal Q\]
For example, \emph{languages} (classes of strings) are Boolean queries on $\tau_{\text{str}}$.

In the following context, we often use the logarithmic function $\log(n)$, whose value is expected to be an integer, so we let $\log(n) = \lceil \log_2(n) \rceil$. Let $[n] = \{0, 1, \dotsc n-1\}$. Note that $\log(n+1)$ is the minimal length of $n$'s binary expression. In this paper, for any formula $\phi(x, X)$, ``$\phi[\frac{x}{a}, \frac{X}{R}]$'' means the value $a$ (resp. $R$) is substituted into $x$ (resp. $X$) if $x$ (resp. $X$) is free. We abuse the notation $|\cdot|$. If $u$ is a string, $|u|$ is its length. If $A$ is a set, $|A|$ is its cardinality. If $\vec x$ is a $k$-tuple, then $|\vec x| = k$.

\subsection{Encoding structures}

In order to represent the structures in a Turing machine, we need to encode structures as strings. W.l.o.g., we take the following way of encoding:
\begin{defn}[Enumerating encoding]
\label{defn:Enumerating encoding}
For any signature $\tau = \{R_1, \dotsc, R_m\}$, where  $arity(R_i) = r_i$ ($1\leq i \leq m$), any $\mathscr A \in$ STRUC$[\tau]_<$ with domain $A = \{a_0, \dotsc, a_{|A|-1}\}$
\begin{enumerate}
\item $enc(\mathscr A) = \langle enc(A)enc(R_1^{\mathscr A})\dotsc enc(R_m^{\mathscr A})\rangle$
\item $enc(A) = \langle enc(a_0)\dotsc enc(a_{|A|-1}) \rangle$
\item For any $i \in \{1, \dotsc, m\}$, suppose $\vec a_1, \dotsc, \vec a_{|R_i^{\mathscr A}|}$ are all $r_i$-tuples in $R_i^{\mathscr A}$, \[enc(R_i^{\mathscr A})  = \langle enc(\vec a_1), \dotsc, enc(\vec a_{|R_i^{\mathscr A}|}) \rangle\]

\item Suppose $\vec t = (t_1, \dotsc, t_s)$ is a tuple with $t_1, \dotsc, t_s \in A$,
\[enc(\vec t) = \langle enc(t_1)\dotsc enc(t_s) \rangle\]

\item Suppose $a$ is the $j$-th element in $A$, $0 \leq j < |A|$,
\[enc(a) = \langle \text{``the } \log|A|\text{-long binary expression of }j \text{''}\rangle\]

\end{enumerate}
Note that 
\[|enc(\mathscr A)| = \log|A| \cdot O(\sum_{1\leq i\leq m}(|R_i^{\mathscr A}|\cdot r_i))\]
and for $i \in \{1, \dotsc, m\}$, 
\[|enc(R_i^{\mathscr A})| = \log|A| \cdot O(|R_i^{\mathscr A}|\cdot r_i)\]

\end{defn}
The length $|enc(\mathscr A)|$ is related to every size $|R_i^{\mathscr A}|$. The machine needs the auxiliary symbols to parse $enc(\mathscr A)$ because it cannot know ahead of time how long $enc(R_i^{\mathscr A})$ is. The extra length of auxiliary symbols can be ignored in a big-Oh notation.

\subsection{Logic characterization of complexity}\label{sec: logic chara}

\begin{defn}
\label{defn:capturing}\cite{grohe2008quest}
A logic $\mathscr{L}$ \emph{captures} a complexity class $\mathcal C$ on a class $\mathcal K$ of structures, if the following conditions are satisfied, 
\begin{enumerate}
\item $\mathscr{L}[\tau]$ is decidable, for any signature $\tau$.
\item There is an effective procedure to associate with each $\mathscr{L}$-sentence $\phi$ a $\mathcal C$-bounded Turing machine M, such that, for any $\mathscr{A} \in \mathcal K$, M can decide whether
\[\mathscr{A} \vDash \phi\]
\item For any Boolean query $\mathcal Q$ in $\mathcal C$, there is an $\mathscr{L}$-sentence $\phi$ such that for any $\mathscr{A} \in \mathcal K$, 
\[\mathscr{A} \vDash \phi \text{  iff  } \mathscr{A} \in \mathcal Q\]
\end{enumerate}
(We assume that $\mathcal K$ is closed under isomorphism.)

If $\mathcal K$ is the class of all structures, we simply say $\mathscr{L}$ captures $\mathcal C$. 
\end{defn}
There are two most classical theorems in descriptive complexity theory.

\begin{thm}[Fagin's Theorem]
\label{thm:Fagin}
\cite{fagin1974generalized}

$\Sigma^1_1$ captures NP. 
\end{thm}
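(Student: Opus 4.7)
The plan is to prove the two inclusions $\Sigma^1_1 \subseteq$ NP and NP $\subseteq \Sigma^1_1$ separately, as required by Definition \ref{defn:capturing}.

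For the first inclusion, given a sentence $\exists X_1\dotsc\exists X_m\,\phi$ with $\phi$ first order, I would describe a nondeterministic machine that, on input $enc(\mathscr A)$, guesses a truth-table for each $X_i$. Since $X_i$ has fixed arity $r_i$, its truth-table has at most $|A|^{r_i}$ bits, which is polynomial in $n=|enc(\mathscr A)|$. Once these are guessed, evaluating a first-order formula on a finite structure can be carried out in polynomial time by straightforward recursion on the formula, each quantifier contributing a factor $|A|$. This yields membership in NP and gives the ``upper bound'' half of the capturing definition.

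For the reverse inclusion, given an NP-machine $M$ that runs in time $n^k$ and decides a Boolean query $\mathcal Q$, I would build a $\Sigma^1_1$ sentence $\phi_M$ that existentially quantifies over (i) a binary relation $\leq$ intended as a linear order on $A$, so that tuples $\vec t\in A^d$ for a sufficiently large constant $d$ can serve as timestamps and tape positions (the set $A^d$ has polynomial size and, under lexicographic order, carries a definable successor), and (ii) relations $T_s(\vec t,\vec p)$, $H(\vec t,\vec p)$, $S_q(\vec t)$, $C(\vec t)$ encoding, respectively, the tape contents, head position, state, and nondeterministic choice of an accepting run. The body of $\phi_M$ is a conjunction saying that $\leq$ is a linear order; that the initial configuration agrees with $enc(\mathscr A)$; that the configuration at the lexicographic successor of any non-final $\vec t$ is obtained from the configuration at $\vec t$ via a transition of $M$ consistent with $C(\vec t)$; and that some time carries the accepting state.

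The main obstacle is the initial-configuration clause, since the first-order formula cannot read $enc(\mathscr A)$ as an external string; it must express, for each position $\vec p$ of the tape and each tape symbol $s$, an equivalence $T_s(\vec 0,\vec p)\leftrightarrow\psi_s(\vec p)$, in which $\psi_s$ is a first-order formula over $\tau\cup\{\leq\}$ that asserts that the $\vec p$-th bit of $enc(\mathscr A)$ is $s$. This requires a careful internalization of Definition \ref{defn:Enumerating encoding}: one partitions $\vec p$ into blocks indicating which relation symbol of $\tau$ is being addressed, which tuple of that relation, and which bit of that tuple's binary expansion is meant, and uses $\leq$ to name elements of $A$ by their binary ranks. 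Once this is in place, the transition and acceptance clauses are routine local constraints, and isomorphism-invariance of $\phi_M$ is automatic because every witness, including the order, is existentially quantified; so the truth value cannot depend on any external ordering of $A$. Combining the two directions establishes that $\Sigma^1_1$ captures NP in the sense of Definition \ref{defn:capturing}.
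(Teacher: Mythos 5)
The paper does not prove this statement; it is quoted as a classical result with a citation to Fagin's original article, so there is no in-paper proof to compare against. Your argument is the standard textbook proof of Fagin's theorem (guess polynomial-size truth-tables and evaluate the first-order matrix for the upper bound; existentially quantify a linear order plus run-encoding relations and internalize the encoding of Definition \ref{defn:Enumerating encoding} for the lower bound), and it is sound, including the key observation that isomorphism-invariance is preserved because the order is itself existentially quantified and the query is closed under isomorphism. The only technicality you leave implicit is the handling of very small domains (e.g.\ $|A|=1$, where $A^d$ cannot index polynomially many time steps), which is routinely dispatched by hard-coding the finitely many isomorphism types of bounded-size structures into a first-order disjunct.
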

\begin{thm}[Immerman-Vardi Theorem]
\label{thm:Immerman-Vardi}
\cite{immerman1982relational,vardi1982complexity} 

IFP captures P on ordered structures. 
\end{thm}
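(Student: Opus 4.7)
The plan is to establish both directions of the capturing relation separately. For the inclusion IFP $\subseteq$ P, I would prove by induction on the structure of formulas that for any IFP-sentence $\phi$ and any $\mathscr A \in$ STRUC$[\tau]$ the relation $\mathscr A \vDash \phi$ is decidable in time polynomial in $|A|$. First-order connectives and quantifiers are handled bottom-up in the usual way. For a subformula of the form $[\text{IFP}_{X,\vec x}\,\psi(X,\vec x)](\vec t)$, I would compute the inflationary stages explicitly: start with $X^0 = \emptyset$ and set $X^{i+1} = X^i \cup \{\vec a : (\mathscr A, X^i) \vDash \psi(X,\vec a)\}$. Since the sequence is monotone inside $A^{arity(X)}$, it stabilises after at most $|A|^{arity(X)}$ stages, and each stage costs a polynomial-time evaluation of $\psi$ by the inductive hypothesis, yielding overall polynomial time. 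Note that this direction does not require a built-in order.

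For the converse P $\subseteq$ IFP on STRUC$[\tau]_<$, I would fix a Boolean query $\mathcal Q$ decided by some polynomial-time Turing machine $M$ running in time $n^c$ on input $enc(\mathscr A)$, and construct an IFP$[\tau]$-sentence $\phi_M$ equivalent to $\mathcal Q$. The strategy is to simulate $M$ inside the logic. Using the built-in $<$, any $k$-tuple from $A$ may be read as the binary address of a number below $|A|^k$, and from $<$ alone the arithmetic predicates BIT, $+$ and $\times$ become first-order definable on such tuples. Taking $k$ large enough for $|A|^k$ to bound the running time, I can name each tape cell and each time step by a tuple, write a first-order formula $\text{Next}(C,C')$ stating that $C'$ is the one-step successor of configuration $C$, and define the transcript of the run as an inflationary fixed point that, stage by stage, accumulates the sequence of successive configurations starting from the initial one. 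After at most $n^c$ stages the fixed point contains the full computation, and one final first-order test asks whether an accepting state ever appears in it; this gives $\phi_M$.

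The main obstacle is the second direction: turning an opaque polynomial-time machine into a uniform IFP formula requires a careful encoding in which the machine's state, tape contents and head positions are represented by relations on $\mathscr A$, together with a genuinely first-order description of one computation step that is uniform in $\mathscr A$. The availability of the built-in order is what makes this encoding possible, because it lets the logic refer to individual bits of $enc(\mathscr A)$ and to time steps by names that match the Turing machine's indexing. Without such an order, no IFP formula can distinguish symmetric tuples inside an unordered structure, which is precisely the well-known barrier that prevents the analogous theorem from holding on all finite structures, and which also motivates the extension of IFP pursued in the remainder of the paper.
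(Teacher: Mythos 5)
The paper does not prove this theorem: it is the classical Immerman--Vardi result, cited from the literature and used as a black box (notably inside the proof of Theorem \ref{thm: main theorem}). So there is no in-paper proof to compare against; what you have written is the standard textbook argument, and its overall architecture is sound. The direction IFP $\subseteq$ P via explicit computation of the inflationary stages, with stabilisation after at most $|A|^{arity(X)}$ steps, is exactly right, as is the observation that this half needs no order. The converse direction --- tuples as time/cell indices, a first-order one-step relation on configurations, and a fixed point accumulating the run --- is also the standard route.

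One step as stated would fail: you claim that ``from $<$ alone the arithmetic predicates BIT, $+$ and $\times$ become first-order definable'' on tuples. They are not; FO over a bare linear order cannot define BIT (this is essentially why FO$(<)$ is strictly weaker than FO$(<,\mathrm{BIT})$). What is true, and what the construction needs, is that these predicates are \emph{IFP}-definable from $<$ --- the paper itself relies on exactly this when it invokes an IFP-formula BIT$(y,x)$ in the proof of Lemma \ref{lem:interpretation}. Since the ambient logic is IFP, the fix is only to build BIT, $+$ and $\times$ by auxiliary fixed points before writing the formula $\text{Next}(C,C')$; with that repair the simulation goes through and the sketch is a correct outline of the classical proof.
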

\noindent IFP is gotten by extending the first-order logic FO with the inflationary fixed-point operator. IFP inherits the formation rules of FO besides
\begin{itemize}
\item If $\psi$ is a formula, then so is $[\text{IFP}_{\vec y\ Y }\psi(\vec y, Y)]\vec t$, where $Y$ is a relation variable and $|\vec y| = |\vec t| = arity(Y)$ 

\end{itemize}
$[\text{IFP}_{\vec y\ Y }\psi(\vec y, Y)]$ is the fixed point of the function $f^{Y \lor \psi}$ defined by the formula $Y\vec y \lor \psi(\vec y, Y)$. This semantics will not be used in this paper, so readers can turn to \cite{ebbinghaus2005finite} and \cite{libkin2013elements} for details.

%It is easy to see: Suppose $\mathcal K_1$, $\mathcal K_2$ are two classes closed under isomorphism; if $\mathcal K_1 \subseteq \mathcal K_2$, and $\mathscr{L}$ captures $\mathcal C$ on $\mathcal K_2$, then $\mathscr{L}$ captures $\mathcal C$ on $\mathcal K_1$. This means we can restrict the theorem on some class we want. For instance, 
%\begin{cor}[``simpler" Immerman-Vardi Theorem]
%\label{cor:Simpler Immerman-Vardi}
%\ \\
%\indent IFP captures P on STRING. 
%\end{cor}
In logic we needn't even study structures over all different signatures. We particularly care about STRING and graphs, which the structures over other signatures can be interpreted to.

\begin{defn}
Let $\mathscr L$ be a logic. Let $\tau$, $\sigma$ be two signatures. $\sigma = \{R_1, R_2,\dotsc ,R_m\}$, where $arity(R_i)=r_i$ ($1 \leq i \leq m$).  A \emph{$k$-ary $\mathscr L$-interpretation} from $\tau$ to $\sigma$ is a series of $\mathscr L[\tau]$-formulas \[I = (\phi_{\text{uni}}(\vec x), \phi_{R_1}(\vec x_1, \dotsc, \vec x_{r_1}), \dotsc \phi_{R_m}(\vec x_1, \dotsc, \vec x_{r_m}))\]
where the variables $\vec x$, $\vec x_1$, ... are $k$-tuples. For any $\mathscr A$ on $\tau$,
\[I(\mathscr A) = ( \phi^{\mathscr A}_{\text{uni}}(\_), \phi^{\mathscr A}_{R_1}(\_, \dotsc, \_), ..., \phi^{\mathscr A}_{R_m}(\_, \dotsc, \_))\]
is a $\sigma$-structure, if we consider the $k$-tuples satisfying $\phi^{\mathscr A}_{\text{uni}}(\vec x)$ as individual elements.  (Note that: $\phi^{\mathscr A}(\_) := \{\vec a \mid \mathscr A\vDash \phi[\frac{\vec x}{\vec a}]\}$) 

Suppose $S_1\subseteq$ STRUC[$\tau$] and $S_2\subseteq$ STRUC[$\sigma$] are two Boolean queries. If $I$ also makes sure for any $\mathscr A \in$ STRUC[$\tau$], 
\[\mathscr A \in S_1 \Longleftrightarrow I(\mathscr A) \in S_2\]
we say $I$ is an $\mathscr L$-\emph{reduction} from $S_1$ to $S_2$. 
\end{defn}
It is not hard to prove for any $\mathscr A$, $\mathscr B\in$ STRUC[$\tau$]
\[\mathscr A \simeq \mathscr B \Longrightarrow I(\mathscr A) \simeq I(\mathscr B)\]
%\begin{prop}
%For any signature $\tau$, there is an FO-interpretation $I$ from $\tau$ to $\{E\}$ such that for any $\mathscr A$, $\mathscr B\in$ STRUC[$\tau$]
%\[\mathscr A \simeq \mathscr B \Longleftrightarrow I(\mathscr A) \simeq I(\mathscr B)\]
%
%\end{prop}
%This proposition frees us from infinitely many different signatures. We focus on GRAPH. The proof of the above proposition is very common in textbooks on model theory. The readers can turn to \cite{ebbinghaus2005finite}, \cite{marker2006model} or \cite{immerman1998descriptive}.

\begin{lem}\label{lem: FO-reduc}
For any signature $\tau$, there is an FO-reduction $I$ from STRUC$[\tau]_<$ to STRING and for any $\mathscr A$, $\mathscr B$ in STRUC$[\tau]_<$,
\[\mathscr A\simeq \mathscr B \Longleftrightarrow I(\mathscr A) = I(\mathscr B)\]
 
\end{lem}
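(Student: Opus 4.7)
The plan is to construct $I$ via a simple canonical \emph{characteristic-vector} encoding rather than the enumerating encoding of Definition~\ref{defn:Enumerating encoding}: for each $R_i \in \tau$ of arity $r_i$, produce the bit string of length $|A|^{r_i}$ whose $j$-th bit is $1$ iff the $j$-th $r_i$-tuple of $A$ in lex order lies in $R_i^{\mathscr A}$, and concatenate these across $i$ with single ``\#''-separators. Any such canonical encoding suffices for the lemma, because on ordered structures every isomorphism is forced to be the unique order-preserving bijection, so isomorphic structures differ at most by a domain-renaming that this encoding erases.

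To present this as an FO-interpretation, let $r = \max_i r_i$ and pick the arity $k$ large enough that $|A|^k$ exceeds $\sum_i |A|^{r_i} + m$ (e.g.\ $k = r + \lceil \log m \rceil$ works for $|A| \ge 2$; the finitely many smaller cases are handled explicitly). The built-in order on $A$ induces the lex order on $A^k$, which I take as $\phi_{<}$, and lets us identify each $k$-tuple with a number in $[0,|A|^k)$. I then carve the first $N := \sum_i |A|^{r_i} + m$ positions into $m$ consecutive blocks $B_1,\dots,B_m$ of lengths $|A|^{r_1},\dots,|A|^{r_m}$, interleaved with single separator positions, and let $\phi_{\text{uni}}$ select exactly these $N$ positions. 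Expressing the cumulative offsets $N_i = \sum_{j \le i}|A|^{r_j}$ relies on the standard fact that polynomially bounded addition, multiplication, and $\mathrm{BIT}$ are FO-definable from $<$ on any finite ordered structure.

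The content formulas are then routine. At each separator position I make exactly one of $\phi_{P_\#}, \phi_{P_\langle}, \phi_{P_\rangle}$ hold. At a non-separator position $\vec p$ with offset $q \in [0, |A|^{r_i})$ inside block $B_i$, the base-$|A|$ digits of $q$ name an $r_i$-tuple $(a_1,\dots,a_{r_i}) \in A^{r_i}$; digit extraction is FO-definable from the order (indeed, since $k \ge r_i$, the relevant digits are essentially coordinates of $\vec p$ after the block shift). I set $\phi_{P_1}(\vec p)$ iff $R_i(a_1,\dots,a_{r_i})$ holds in $\mathscr A$, and $\phi_{P_0}(\vec p)$ otherwise. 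The five predicate formulas partition $\phi_{\text{uni}}$, so $I(\mathscr A)$ is a genuine $\tau_{\text{str}}$-string.

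Finally, $I(\mathscr A) = I(\mathscr B)$ iff the characteristic bit string of each $R_i$ agrees between $\mathscr A$ and $\mathscr B$ after aligning their unique linear orders, which is in turn equivalent to $\mathscr A \simeq \mathscr B$ on ordered structures. The main obstacle is the arithmetic bookkeeping in the second step: assembling the block boundaries $N_i$ as FO-formulas forces us to invoke the (routine but non-trivial) FO$[<]$-definability of polynomially bounded arithmetic and $\mathrm{BIT}$. Once that toolkit is in place, the remainder of the construction is careful indexing.
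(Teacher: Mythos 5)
Your overall strategy is sound and, in fact, better adapted to this lemma than the paper's own enumerating encoding of Definition~\ref{defn:Enumerating encoding} would be: the characteristic-vector encoding assigns to each position a single bit determined directly by relation membership, so no binary expansions of element ranks are ever needed, and the canonicity claim ($I(\mathscr A)=I(\mathscr B)$ iff $\mathscr A\simeq\mathscr B$) follows from the uniqueness of the order-preserving bijection exactly as you say. The paper offers no proof of this lemma, so there is nothing to match against; on its own terms your construction is the standard one.

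There is, however, one genuinely false step: the ``standard fact'' you invoke, that polynomially bounded addition, multiplication and $\mathrm{BIT}$ are FO-definable from $<$ on finite ordered structures, does not hold. FO$[<]$ is strictly weaker than FO$[<,\mathrm{BIT}]$ (over strings it defines only star-free languages), and none of $+$, $\times$, $\mathrm{BIT}$ is definable from the order alone; this is precisely why the paper, in the proof of Lemma~\ref{lem:interpretation}, constructs $\mathrm{BIT}$ as an \emph{IFP}-formula rather than an FO-formula. As written, your bookkeeping for the block boundaries $N_i$ therefore rests on an unavailable tool. The gap is repairable in two ways. First, the particular numbers you need are harmless: each $N_i=\sum_{j\le i}|A|^{r_j}$ (plus a constant number of separator positions) has, in base $|A|$, only constantly many nonzero digits, each of constant value at most $m$, so comparing the lex rank of a $k$-tuple against $N_i$, and subtracting $N_{i-1}$ from it, reduces to a fixed finite case analysis on the $k$ coordinates using only $<$ and the FO-definable predicates ``$x$ is the $c$-th element'' for constants $c$. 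Second, and more cleanly, you can avoid offset arithmetic altogether by reserving one extra coordinate of the interpreting tuple as a block selector: a position is a pair (block index, $r$-tuple), the lexicographic order with the block index as the most significant coordinate realizes the concatenation for free, and the content formulas simply pattern-match on that coordinate. With either repair the argument goes through; without one, the central indexing step is unjustified.
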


\begin{lem}\label{lem: trans}\cite{ebbinghaus2005finite}
Let $\phi$ be a formula of IFP$[\sigma]$
\[I = \langle \phi_{\text{uni}}, \phi_{R_1}, ..., \phi_{R_m}\rangle\] 
is a $k$-ary reduction from STRUC$[\tau]$ to STRUC$[\sigma]$. There exists an IFP$[\tau]$-sentence $\phi^I$ such that for any $\mathscr A \in$ STRUC$[\tau]$, 
\[\mathscr A \vDash \phi^I \Longleftrightarrow I(\mathscr A) \vDash \phi\]  

%\begin{itemize}
%\item replacing every variable $x$ occuring in $\phi$ by a new  $k$-tuple $\vec x$ (which consists of all new variables, let's denote it by $x^I$),  
%
%\item replacing every relation $R_i(x_1, x_2, \dotsc x_{arity(Ri)})$ in $\phi$ by 
%\[\phi_{R_i}(x_1^I, x_2^I, \dotsc x^I_{arity(Ri)}) \land \bigwedge_{j=1}^{arity(R_i)}\phi_{\text{uni}}(x_j^I),\]
%
%\item changing the subformula $\forall x \dotsc$ in $\phi$ to $\forall x^I ( \phi_{\text{uni}}(x^I) \to \dotsc)$,
%
%\item changing the subformula $\exists x \dotsc$ in $\phi$ to $\exists x^I ( \phi_{\text{uni}}(x^I) \land \dotsc)$,
%
%\item For $[\text{IFP}_{\vec y\ Y }\psi(\vec y, Y)]$ is in $\phi$, where $\vec y = y_1y_2\dots y_l$ and $arity(Y) = l$, then replacing $[\text{IFP}_{\vec y\ Y }\psi(\vec y, Y)]$ by
%\[[\text{IFP}_{y_1^Iy_2^I\dots y_l^I\ Y^I } \bigwedge_{1\leq i \leq l}\phi_{\text{uni}}(y_i^I)\land \psi^I(y_1^Iy_2^I\dots y_l^I, Y^I)]\]
%where $Y^I$ is an $l\cdot k$-ary new relation variable.
%\end{itemize}

\end{lem}

These two lemmas tell us STRING and ordered structures are deeply related. In the following context, we will first prove our theorem on STRING, and naturally it holds on ordered structures.

%---------------------------------------------------------------------------------------------
\section{Capturing Results}
Here is an alternative definition of $\beta$P prepared for our later proofs
\begin{defn}
\label{defn:Alternative}
A language $L$ is in the class $\beta_k$ if there is a language $L^\prime \in $ P together with an integer $c>0$ such that for any string $u$, $u \in L$ if and only if $\exists v \in \{0,1\}^{\leq c\cdot \log^k(|u|)}$, $u\# v\in L^\prime$. (where $\{0,1\}^{\leq c\cdot \log^k(|u|)}$ is all the 0-1 strings of length at most $c\cdot \log^k(|u|)$.)

$\beta$P $= \bigcup_{k \in \mathbb N} \beta_k$. 
\end{defn}
Since $\beta_1 = GC(\log, \text{P})$, in fact the ``guess'' part can be computed in time $2^{c\cdot \log}$, which is a polynomial. Thus we have 
\[\text{P} = \beta_1 \subseteq \beta_2 \subseteq \dotsc\subseteq \beta\text{P} \subseteq \text{NP} \]

\subsection{Logarithmic-bounded quantifiers}
The log-quantifier $\exists^{\log^k}$ is the second-order quantifier with a bound $\log^k$. As we mentioned,
\begin{align*}
\mathscr A \vDash \exists^{\log^k} X\phi \Longleftrightarrow &\text{ there is a subset } S\subseteq A^{arity(X)}\text{ with }|S|\leq \log^k(|A|),\\
&\text{ such that } \mathscr A \vDash \phi[\frac{X}{S}]
\end{align*}
It doesn't matter how large $arity(X)$ is. As long as $arity(X)$ is a nonzero natural number, $\exists^{\log^k}$ can be applied. Naturally
\[\forall^{\log^k} X := \neg\exists^{\log^k} X\neg\]
Let $\log^\omega = \{\log^k \mid k > 0\}$. Then $\exists^{\log^{\omega}} = \{\exists^{\log^k} \mid k > 0\}$

\begin{defn}
\label{defn: prenex forms}

An formula of $\exists^{\log^\omega}$IFP is in the form,
\[\exists^{\log^{k_1}} X_1\exists^{\log^{k_2}} X_2\dotsc  \exists^{\log^{k_m}}X_m\psi\]
where $m\geq 0$; $k_1, k_2, \dotsc k_m > 0$; $\psi$ is an IFP-formula.  
\end{defn}
Those formulas without any occurrences of log-quantifiers are \emph{log-quantifier-free}

Here are three parameters we will use. The \emph{maximal variable arity} of a formula, $mva(\phi) = \textbf{max}\{arity(X) \mid X\text{ is a relation variable, free or bounded by a log-quantifier, in }\phi\}$. The \emph{height} of a formula, $height(\phi) = \textbf{max}\{k \mid \exists^{\log^k}\text{ or }\forall^{\log^k}\text{ occurs in }\phi\}$. The \emph{log-quantifier rank} of a formula,
\begin{itemize}
\item $lqr(\phi) = 0$, if $\phi$ is atomic
\item $lqr(\phi) = lqr(\psi)$, if $\phi = \neg\psi$
\item $lqr(\phi) = \textbf{max}(lqr(\psi_1), lqr(\psi_2))$, if $\phi = \psi_1 \to \psi_2$
\item $lqr(\phi) = lqr(\psi)$, if $\phi = \exists x \psi$
\item $lqr(\phi) = lqr(\psi) + 1$, if $\phi = \exists^{\log^k} X \psi$ for $k>0$.

\end{itemize}
For $k>0$, $\exists^{\log^k}$IFP is the sublogic of $\exists^{\log^\omega}$IFP, the heights of whose formulas are no larger than $k$.

\subsection{Main theorem}
\begin{thm}
\label{thm: main theorem}
$\exists^{\log^{\omega}} \text{IFP}$ captures $\beta$P on STRING.
\end{thm}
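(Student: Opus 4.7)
The plan is to prove the two inclusions of Theorem~\ref{thm: main theorem} separately, relying on the Immerman--Vardi Theorem~\ref{thm:Immerman-Vardi} at both ends: once to evaluate the IFP-body of a given sentence in polynomial time, and once to convert the P-language ``check'' part of a $\beta$P-computation into an IFP-sentence. Throughout I work with formulas already in the prenex form of Definition~\ref{defn: prenex forms}.

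For ``$\exists^{\log^{\omega}}\mathrm{IFP}\subseteq\beta\mathrm{P}$'', given $\phi=\exists^{\log^{k_1}}X_1\cdots\exists^{\log^{k_m}}X_m\,\psi$, I would describe a $\beta$P-machine that, on input $u$, nondeterministically guesses enumerations of relations $S_i\subseteq U^{\mathrm{arity}(X_i)}$ with $|S_i|\leq\log^{k_i}(|u|)$. Encoding one tuple of $S_i$ costs $\mathrm{arity}(X_i)\cdot\log|u|$ bits, so the total guess length is $O\!\left(\sum_{i}\log^{k_i+1}(|u|)\right)=O(\log^{K}(|u|))$ with $K:=1+\max_i k_i$. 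After the guess, the machine runs the polynomial-time evaluator for $\psi$ on the expansion $(u,S_1,\ldots,S_m)$ supplied by Immerman--Vardi. The language thus lies in $\beta_K\subseteq\beta\mathrm{P}$.

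For the converse, fix $L\in\beta_k$ witnessed by some $L'\in\mathrm{P}$ and constant $c>0$ (Definition~\ref{defn:Alternative}), and fix via Immerman--Vardi an IFP-sentence $\chi$ with $w\vDash\chi\iff w\in L'$ for every string $w$. To define $L$, I plan to use a single unary log-quantifier $\exists^{\log^{k+1}}X$: the set $X\subseteq U$ it produces (of size at most $\log^{k+1}(|u|)\geq c\log^{k}(|u|)$ for $|u|$ beyond a constant threshold $N_0$) encodes the 1-positions of the guessed witness $v$, while an inner first-order $\exists\ell$ supplies the length $|v|=\ell$. Next I would build an FO-interpretation $I$ from $\tau_{\mathrm{str}}\cup\{X,\ell\}$ into $\tau_{\mathrm{str}}$ whose output $I(u,X,\ell)$ is literally the string $u\#v$: since $|u\#v|\leq 2|u|$, the new domain can be carved out of $U^{2}$ lex-ordered, and each of $P_{0},P_{1},P_{\#},P_{\langle},P_{\rangle}$ is FO-definable by reading $u$ on the first $|u|$ positions, placing $P_{\#}$ at position $|u|$, and using $X$ (resp.\ its complement) to read 1's (resp.\ 0's) on the next $\ell$ positions. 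By the translation lemma (Lemma~\ref{lem: trans}) there is an IFP-formula $\chi^{I}$ with $(u,X,\ell)\vDash\chi^{I}\iff I(u,X,\ell)\vDash\chi\iff u\#v\in L'$, so the desired sentence is
\[
\phi_{L}\;=\;\exists^{\log^{k+1}}X\,\exists\ell\,\bigl(\theta(\ell,X)\wedge\chi^{I}\bigr),
\]
where $\theta$ is FO on the ordered string and enforces both $\ell\leq c\log^{k}(|u|)$ and $X\subseteq[\ell]$; the finitely many inputs with $|u|\leq N_0$ are absorbed by an additional closed FO-disjunction. This sentence fits the prenex shape of Definition~\ref{defn: prenex forms}.

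The main obstacle I anticipate is in the second direction, specifically the FO-interpretation: the length bound $c\log^{k}(|u|)$ has to be coded with nothing more than the built-in order and first-order tools, the auxiliary markers $P_{\#},P_{\langle},P_{\rangle}$ must land at exactly the right positions of $I(u,X,\ell)$, and the two ``blocks'' (the copy of $u$ and the guessed $v$) have to be spliced together so that $\chi^{I}$ faithfully simulates $\chi$ on the intended string. Each ingredient is FO-definable because the structure is ordered, but the bookkeeping, together with the absorption of the constant $c$ into the extra logarithmic factor and the small-$|u|$ corner cases, is the non-routine portion of the argument.
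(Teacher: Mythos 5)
Your proposal is correct and follows the same overall strategy as the paper: both directions rest on the Immerman--Vardi theorem plus an interpretation/translation argument, with the log-quantifiers playing the role of the ``guess''. The first direction is essentially identical to the paper's. In the hard direction, however, your encoding of the witness $v$ is genuinely different. The paper represents $v$ by $r$ \emph{binary} relations $R_1,\dots,R_r$, each of cardinality at most $\log^k(|U|)$, and proves a dedicated surjectivity lemma (Lemma~\ref{lem:interpretation}): the map $J^U$ extracts $\log(|U|)-1$ bits from the second component of each pair, so a single $\exists^{\log^k}$-bounded binary relation already carries $\log^{k}(|U|)\cdot(\log(|U|)-1)$ bits --- enough for a $\beta_{k+1}$ witness --- while the constant $c$ is absorbed into the number $r$ of relations and the (possibly uncomputable) length bound $f$ is sidestepped by having the modified machine $\mathrm{M}'$ accept padded strings whose prefix is a valid witness. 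You instead use a single \emph{unary} set $X$ of 1-positions together with an explicit length variable $\ell$; this avoids the bit-extraction and the surjectivity lemma entirely, at the cost of one quantifier level: you obtain $\beta_k\subseteq\exists^{\log^{k+1}}\mathrm{IFP}$ rather than the paper's exact level-by-level correspondence $\exists^{\log^{k}}\mathrm{IFP}=\beta_{k+1}$. For the theorem as stated (an assertion about the unions) this loss is immaterial, but the finer bookkeeping is what buys the paper its Corollary that quantified relation variables may be taken binary without increasing the height. Two small points to tidy: your $\theta$ (enforcing $\ell\le c\log^k(|u|)$ and $X\subseteq[\ell]$) is IFP-definable on ordered strings via BIT but is not definable in pure FO with order alone, so it should be labelled IFP rather than FO; and your interpretation carries $\ell$ as a first-order parameter, which is a mild but standard extension of Lemma~\ref{lem: trans} as stated (alternatively, replace $\ell$ by a singleton unary relation or a trailing end-marker).
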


Actually we will prove for $k \geq 1$, $\exists^{\log^{k}}\text{IFP}$ captures $\beta_{k+1}$ on STRING. Note that an $\exists^{\log^{k}}\text{IFP}[\tau_{\text{str}}]$-sentence corresponds to a $\beta_{k+1}$-bounded Turing machine, not a $\beta_k$-bounded one. It is because for any $u \in$ STRING and any relation variable $X$, when we encode the value of $X$, as we did in definition \ref{defn:Enumerating encoding}, $|enc(X)| = |O(\log^{k+1}|U|)|$. According to definition \ref{defn:capturing}, our proof consists of three parts. The main idea is simple: we use ``$\exists^{\log^{k}}X$'' to simulate ``$\exists v \in \{0,1\}^{\leq c\cdot \log^k(|u|)}$'' in definition \ref{defn:Alternative} and vice versa; then we apply Immerman-Vardi's theorem.

But here is a problem: for any $v$ in ``$\exists v \in \{0,1\}^{\leq c\cdot \log^k(|u|)}$'' in definition \ref{defn:Alternative}, can we have an IFP-reduction $I$ such that there exists $X$ in ``$\exists^{\log^{k}}X$'' and $I(X) = v$?

\begin{lem}
\label{lem:interpretation}
Let $k \in \mathbb N - \{0\}$

There is an encoding $J$ such that for any string $u$ with domain $U$, $J^U$ is a \emph{surjection} from $\{S\mid S\subseteq U^3 \text{ and } |S| \leq \log^k (|U|)\}$ to $\{0, 1\}^{\leq \log^k(|U|)\cdot (\log(|U|) - 1)}$.

And let $\tau_r = \tau_{\text{str}} \cup \{R_1, R_2, \dotsc R_r\}$, where $R_1, \dotsc R_r$ are ternary relation symbols. There is an IFP-reduction $I$ from STRUC[$\tau_r$] to STRING such that for any $u \in $ STRING and binary relations $R_1^u, \dotsc R_r^u \in $ $\{S\mid S\subseteq U^3 \text{ and } |S| \leq \log^k (|U|)\}$, 
\[I(( u, R_1^u, \dotsc R_r^u)) = u\#J^U(R_1^u)J^U(R_2^u)\dotsc J^U(R_r^u)\]
\end{lem}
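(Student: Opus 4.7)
The plan splits into a combinatorial construction of $J$ and the IFP-implementation of $J$ as the reduction $I$.

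\textbf{Construction of $J$.} Let $n = |U|$, $m = \log^k(n)$, and $L = m(\log(n) - 1)$. Given $S \subseteq U^2$ with $|S| = s \leq m$, enumerate its pairs in lex order $(a_1,b_1) < \cdots < (a_s,b_s)$ using the built-in order on $U$. The natural scheme packs $(\log n - 1)$ bits of information from each $b_i$: read its low-order $(\log n - 1)$ bits. This already hits all target lengths that are multiples of $\log n - 1$. To achieve full surjection onto $\{0,1\}^{\leq L}$, I let the last $a$-value $a_s$ carry a truncation offset $t = a_s \bmod (\log n - 1)$ that shortens the final block by $t$ bits, producing outputs of length $s(\log n - 1) - t$ for any $t \in \{0, \ldots, \log n - 2\}$. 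Surjectivity is then a direct construction: for any $w \in \{0,1\}^\ell$ with $\ell \leq L$, set $s = \lceil \ell/(\log n - 1) \rceil$ and $t = s(\log n - 1) - \ell$, slice $w$ into $s - 1$ full blocks plus one short block, choose the $b_i$'s to realize the blocks, and pick the $a_i$'s to be distinct, lex-ordered, with the correct truncation residue at the end. There is ample room in $U$ because $n$ dwarfs $m$ for all relevant $n$.

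\textbf{The IFP-reduction $I$.} The output string has total length $|u| + 1 + \sum_{i=1}^r |J^U(R_i^u)|$, which is at most $|u| + 1 + rL = O(|U| \cdot \log^{k+1}|U|)$ and so is comfortably indexed by $(k+2)$-tuples over $U$ (or any sufficiently large arity) under lex order. I take $\phi_<$ to be lex order on tuples and $\phi_{\text{uni}}$ to cut out the initial segment matching the output length. For the bit predicates $\phi_{P_0}, \phi_{P_1}$ (with $\phi_{P_\#}$ marking the single separator position $|u|$, and $\phi_{P_\langle}, \phi_{P_\rangle}$ empty), a position $p$ is handled in three cases: (a) $p < |u|$ copies the bit of $u$ at position $p$; (b) $p = |u|$ is the separator $\#$; (c) otherwise $p$ falls into the $i$-th output block, where $i$ is determined by the cumulative block lengths $\sum_{j \leq i} |J^U(R_j^u)|$, and inside the block the intra-block offset $q$ identifies which pair of $R_i^u$ (by lex rank) the bit comes from and which bit of its $b$-coordinate to extract.

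The main obstacle is verifying case (c) is fully IFP-definable on the ordered structure $(u, R_1^u, \ldots, R_r^u)$. It decomposes into several auxiliary routines: computing $|J^U(R_j^u)|$ from $|R_j^u|$ and the truncation residue of the largest-lex pair of $R_j^u$; integer arithmetic on positions (division and modulo by $\log n - 1$); lex-ranking a pair within $R_i^u$; and bit-access on an element of $U$. Each of these is polynomial-time computable on an ordered structure and hence IFP-expressible by Immerman--Vardi (Theorem~\ref{thm:Immerman-Vardi}); I plan to write the auxiliary formulas as explicit inflationary fixed-points and assemble $\phi_{P_0}, \phi_{P_1}$ from them by Boolean combination. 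Correctness of $I$ then reduces to matching the IFP block-decomposition against the construction of $J$ position by position.
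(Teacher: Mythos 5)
Your proposal is correct, and its heart --- encode each pair of $S$ by the $\log(|U|)-1$ low-order bits of its second coordinate, using the first coordinate only to keep the pairs distinct and lexicographically ordered --- is exactly the paper's $J$, which the paper phrases as: take $enc(S)$, delete the first component of each pair, delete one bit of each remaining block, and strip the brackets. You differ in two places. First, your truncation offset $t=a_s \bmod (\log n-1)$ is an addition the paper does not make, and it is arguably needed for the literal claim: the paper's $J^U$ only produces strings whose length is a multiple of $\log(|U|)-1$, so its ``easy to verify'' surjectivity onto all of $\{0,1\}^{\leq \log^k(|U|)\cdot(\log(|U|)-1)}$ is, read strictly, not true (harmless for Theorem~\ref{thm: main theorem}, where $\text{M}^\prime$ only reads a prefix of $z$, but your version is the one that matches the statement as written). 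Second, the reduction: the paper does not decode positions arithmetically. It builds a $(\log(r)+6)$-ary universe whose tuples directly name the relevant data --- which segment of the output ($x_1x_2$), which relation (via $z_1\dotsc z_{\log r}$), which pair $(x_5,y)$, which bit position $x_3<\log(|U|)-1$, and the bit value itself in a dedicated coordinate forced by $x_4=1\leftrightarrow \text{BIT}(y,x_3)$ --- so that $\phi_{P_0}$ and $\phi_{P_1}$ simply read $x_4$ and the only nontrivial ingredient is BIT. Your scheme instead takes an initial segment of lex-ordered tuples as positions and recovers the block structure by cumulative sums, division and modulo, outsourced to Immerman--Vardi. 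Both are legitimate: the paper's encoding buys explicitness and avoids all position arithmetic, while yours is conceptually uniform but leaves several fixed-point subroutines to be written out, and --- because your $J$ now has data-dependent block lengths --- additionally forces $\phi_{\text{uni}}$ and $\phi_<$ to compute those variable offsets, which the paper's fixed-format blocks sidestep.
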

\begin{proof}
(of lemma \ref{lem:interpretation})

With the help of the linear order $<^u$, we can construct IFP-formula BIT$(y,x)$, which means ``the $x$-th bit of the binary expression of $y$ is 1''.  (But here we do not provide the details of BIT. The readers can check Page 96 of \cite{libkin2013elements}.)

Let $\vec x = x_1x_2x_5x_6yz_1\dotsc z_r x_3x_4$. It's an $(r + 7)$-ary tuple of variables. Now we define:
\begin{align*}
\phi_< \text{ is }&\text{the lexicographic order of these tuples}\\
\phi_{P_0}(\vec x) :=  &(x_1 = x_2 = 0 \land P_0(x_4)) \lor (x_1 = x_2 = 1 \land x_4 = 0)\\
\phi_{P_1}(\vec x) :=  &(x_1 = x_2 = 0 \land P_1(x_4)) \lor (x_1 = x_2 = 1 \land x_4 = 1)\\
\phi_{P_\#}(\vec x) :=  &(x_1 = x_2 = 0 \land P_\#(x_4)) \lor (x_1 = 0 \land x_2 = 1)\\
\phi_{P_{\langle}}(\vec x) := &(x_1 = x_2 = 0 \land P_\langle(x_4))\\
\phi_{P_{\rangle}}(\vec x) := &(x_1 = x_2 = 0 \land P_\rangle(x_4))
\end{align*}
\begin{align*}
\phi_{\text{uni}}(\vec x) := &\text{\phantom{an}} (x_1 = x_2 = 0 \\
						    &\text{\phantom{an}} \land y  = z_1 = \dotsc = z_r = x_3 = x_5 = x_6 = 0)\\
						    &\lor (x_1 = 0 \land x_2 = 1\\
						    &\text{\phantom{an}} \land y = z_1 = \dotsc = z_r = x_3 = x_4 = x_5 = x_6 = 0)\\
						    &\lor (x_1 =  x_2 = 1 \\
						    &\text{\phantom{an}} \land (\bigvee_{1\leq i \leq r}(R_ix_5x_6y \\
						    &\text{\phantom{aaaaaaaaaa}} \land z_i = 0 \land z_1 = \dots = z_{i-1} = z_{i+1} = \dots = z_r = 1\\
						    &\text{\phantom{aaaaaaaaaa}} \land x_6 < \log(|U|) - 1\\
						    &\text{\phantom{aaaaaaaaaa}} \land x_3 \leq x_6\\
						    &\text{\phantom{aaaaaaaaaa}} \land x_4 = 1 \leftrightarrow \text{BIT}(y, x_3)\\
						    &\text{\phantom{aaaaaaaaaa}} \land x_4 = 0 \leftrightarrow \neg \text{BIT}(y, x_3))))
\end{align*}
Let $I = ( \phi_{\text{uni}}, \phi_<, \phi_{P_0}, \phi_{P_1}, \phi_{P_\#}, \phi_{P_{\langle}}, \phi_{P_{\rangle}})$. In each $(r+7)$-ary tuple, $x_1$ and $x_2$ indicates the tuple to be in $u$, $\#$ or $J^U(R_1^u)J^U(R_2^u)\dotsc J^U(R_r^u)$. For any 01 string $v$ with $|v| \leq \log(|u|) - 1$, there exists some $R_i^u$ in which there is a tuple $x_5x_6y$ such that
\begin{itemize}
\item $x_6 + 1 = |v|$ and

\item the first $(x_6 + 1)$ bits of the binary expression of $y$ is exactly $v$.
\end{itemize}
(Note that when we change the bound $\log(|u|) - 1$ to $\log(|u|)$, this will no more hold.) The variable $x_5$ makes sure that $v$ can repeat in $J^U(R_i^u)$ at most $\log(|u|)$ times. This means for any 01 string of length $\leq \log(|u|)\cdot (\log(|u|) - 1)$, we choose a $R_i^u$ to make $J^U(R_i^u)$ equal to it. So $J^U$ is a surjection.
\end{proof}

\begin{proof}
(\textbf{of theorem \ref{thm: main theorem}})

By definition \ref{defn:capturing}, our proof consists of three parts. Let $k >0$.

Firstly. $\exists^{\log^{\omega}}\text{IFP}[\tau]$ is decidable, for any signature $\tau$.

Secondly. For any $\exists^{\log^{k}}\text{IFP}[\tau_{\text{str}}]$-sentence $\phi = \exists^{\log^{k_1}}X_1\dotsc\exists^{\log^{k_m}}X_m \psi$, where $\psi$ is log-quantifier-free and all its relation variables are among $X_1\dotsc X_m$ and $k_1, \dotsc k_m \leq k$. We construct a $\beta_{k+1}$-bounded Turing machine M$_{\phi}$ as follows: for any $u\in$ STRING, 
\begin{align*}
u\vDash \phi \Longleftrightarrow &\text{there are }S_1 \subseteq U^{arity(X_1)}\text{, } \dotsc\text{ , }S_m \subseteq U^{arity(X_m)}\\
&\text{and }|S_1| \leq \log^{k_1}|u|\text{, } \dotsc\text{ , }|S_m| \leq \log^{k_m}|u|\\
&\text{such that }u\vDash \psi[\frac{X_1}{S_1},\dotsc, \frac{X_m}{S_m}]
\end{align*}
By theorem \ref{thm:Immerman-Vardi}, there is a P-bounded Turing machine M$_{\psi}$ that can verify whether
\[\mathscr A \vDash \psi[\frac{X_1}{R_1},\dotsc ,\frac{X_m}{R_m}]\]
for $\mathscr A$ on $\tau_{\text{str}} \cup \{X_1,\dotsc, X_m\}$ and $\mathscr A$'s explanation $R_1,\dotsc,R_m$ of $X_1,\dotsc, X_m$.

In order to guess and store the values of $X_1,\dotsc, X_m$, by definition \ref{defn:Enumerating encoding}, M$_{\phi}$ will need
\[O(\log^{k_1 + 1}|u| \cdot arity(X_1) + \dotsc + \log^{k_m + 1}|u|\cdot arity(X_m))\]
nondeterministic bits, or simply, $O(\log^{k+1}|u|)$ nondeterministic bits in total. 

then M$_{\phi}$ returns TRUE if there are $S_1,\dotsc,S_m$ with $|S_1| \leq \log^{k_1}|u|\text{, } \dotsc\text{ , }|S_m| \leq \log^{k_m}|u|$, such that M$_{\psi}$ accepts $\langle u, S_1,\dotsc,S_m\rangle$. Otherwise M$_{\phi}$ returns FALSE. 

So M$_{\phi}$ is a $\beta_{k+1}$-bounded machine that we want.

Thirdly. Suppose $L$ is a language in $\beta_{k+1}$. By definition \ref{defn:Alternative}, there is a natural number $c$ and a P-bounded Turing machine M, such that for any $u \in$ STRING,
\[u \in L \Longleftrightarrow \exists v \in \{0,1\}^{\leq c\cdot \log^{k+1}(|u|)} \text{ M accepts }u\#v\]
There exists $r \in \mathbb N - \{0\}$ such that for any $n\in \mathbb N - \{0\}$, $c\cdot \log^{k+1}(|u|) \leq r\cdot \log^{k}(n) \cdot (\log(n) - 1)$. Let $R_1, \dotsc R_r$ be $r$ new ternary relation symbol. We can construct a P-bounded Turing machine M$^\prime$ such that for any strings $u, v, z$ with $v\in \{0,1\}^{\leq c\cdot \log^{k+1}(|u|)}$ and $z\in \{0,1\}^{\leq r\cdot \log^{k}(|u|)\cdot(\log(|u|)-1)}$
\begin{align*}
\text{M}^\prime \text{ accepts }u\#z \Longleftrightarrow &\text{M accepts }u\#v\\
&\text{and $v$ is the leftmost }c\cdot \log^{k+1}(|u|)\text{ bits of $z$.}
\end{align*}
By theorem \ref{thm:Immerman-Vardi}, there is an IFP[$\tau_{\text{str}}$]-sentence $\phi_{\text{M}^\prime}$ such that for any $v \in$ STRING, 
\[v \vDash \phi_{\text{M}^\prime} \Longleftrightarrow \text{M}^\prime\text{ accepts }v\]
By lemma \ref{lem:interpretation}, there is a ($r + 7$)-ary IFP reduction from  STRUC$[\tau_{\text{str}}\cup\{R_1, \dotsc R_r\}]$ to STRING, 
$I = \langle \phi_{\text{uni}}, \phi_<, \phi_{P_0}, \phi_{P_1}, \phi_{P_\#}, \phi_{P_\langle},  \phi_{P_\rangle} \rangle$.
With the help of lemma \ref{lem: trans}, let \[\psi := \phi_{\text{M}^\prime}^I\]
$\psi$ is an IFP-sentence on $\tau_{\text{str}}\cup\{R_1, \dotsc R_r\}$. Let
\[\phi = \exists^{\log^k}R_1, \dotsc \exists^{\log^k} R_r \psi\]
which is an $\exists^{\log^{k}}\text{IFP}[\tau_{\text{str}}]$-sentence. And for any $u \in$ STRING,
\[u \in L \Longleftrightarrow u \vDash \phi\]

\end{proof}
\section{The Expressive Power}
IFP fails on a very important P-decidable Boolean query, EVEN \cite{ebbinghaus2005finite}. For any graph $\mathcal G$, 
$\mathcal G \in $ EVEN if and only if domain $|V|$ is even. There is \emph{no} sentence $\phi$ of IFP[$\{E\}$] such that 
\[
\mathcal G \in \text{EVEN} \Longleftrightarrow \mathcal G \vDash \phi
\]
(EVEN is not definable in IFP.) So IFP fails to capture P (on all finite structures). Unfortunately, our strengthened version $\exists^{\log^\omega}$IFP fails, too.

\begin{thm}\label{thm: Failure}
EVEN is not definable in $\exists^{\log^\omega}$IFP.
\end{thm}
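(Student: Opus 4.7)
The plan is to extend the classical game-theoretic refutation of IFP on EVEN by designing an Ehrenfeucht-Fra\"{i}ss\'{e}-style game appropriate to $\exists^{\log^\omega}$IFP, and then exhibiting a winning strategy for duplicator on two bare structures of different parities. I would first introduce a game that augments the standard pebble game for IFP with a new \emph{log-quantifier move}: at his turn the spoiler may, for some arity $r$ and some $k \leq height(\phi)$, pick a set $S \subseteq A^r$ with $|S|\leq \log^k|A|$ in one of the two structures, and duplicator must respond with a set $S'$ of size $\leq \log^k|B|$ in the other. The game then continues on the structures expanded by $S$ and $S'$. A soundness lemma, proved by induction on formula complexity, would state that if duplicator wins the game whose parameters track $mva(\phi)$, $height(\phi)$, $lqr(\phi)$, and the usual IFP pebble count and unfolding depth, then $\mathscr A \vDash \phi$ iff $\mathscr B \vDash \phi$.

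Next, I would apply this game to $\mathscr A_N = ([2N], =)$ and $\mathscr B_N = ([2N+1], =)$. In pure-equality structures the only isomorphism invariant is the equality pattern on the named elements, so duplicator's strategy is to maintain a partial bijection $f$ on the \emph{active} part of the two structures, namely the pebbled elements together with every element occurring in some set previously chosen by spoiler. On a log-quantifier move in which spoiler picks $S \subseteq A^r$ of size $\leq \log^k(2N)$, duplicator extends $f$ to the at most $r\cdot \log^k(2N)$ new elements by sending them to arbitrarily chosen fresh elements on the other side, and sets $S' := f(S)$ (or the symmetric image when spoiler plays in $\mathscr B_N$); ordinary pebble moves are answered as in the usual IFP game via $f$ and its inverse. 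The critical counting observation is that the entire play has length and arity bounded in terms of $\phi$ alone, so the active part stays of size at most $O(\log^{height(\phi)} N)$, which for $N$ sufficiently large is far smaller than $N$; fresh elements are therefore always available, and the parity difference is never exposed.

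The main obstacle is the soundness lemma in the case of the log-quantifier move interacting with the inflationary fixed-point operator. The fixed point computed from an IFP subformula $\psi(\vec y, Y)$ depends genuinely on the guessed relation $S$, so one must verify that duplicator's choice of $S'$ really does induce the corresponding fixed point under $f$. For this I would combine the back-and-forth argument for IFP with the observation that, by construction, $f$ is a partial isomorphism between the \emph{expanded} structures $(\mathscr A_N, S)$ and $(\mathscr B_N, S')$; monotone iteration of the stage operator then preserves this correspondence on matched tuples. A secondary bookkeeping issue is controlling the nested growth of the active part under alternating log-quantifier and IFP rounds so that the freshness budget survives — this reduces to the elementary estimate that the product of the parameters $mva(\phi)$, $height(\phi)$, $lqr(\phi)$, and pebble count remains polylogarithmic in $N$, so choosing $N$ large in terms of $\phi$ is enough to guarantee the duplicator's strategy.
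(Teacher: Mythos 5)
Your overall plan coincides with the paper's: design an Ehrenfeucht--Fra\"{i}ss\'{e}-style game with an added \emph{relation move} for the log-quantifiers, then beat any candidate sentence on two relation-free structures of sizes $2N$ and $2N+1$ by maintaining a partial bijection on the ``active'' (mentioned plus pebbled) elements, which stay polylogarithmically small so fresh elements are always available. That counting argument and the duplicator's strategy for the relation moves are exactly what the paper does (Theorem \ref{thm: EVEN FAILURE}), modulo the minor point that one should also arrange $\log(|A|)=\log(|B|)$ so the cardinality bounds on the two sides agree.

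The gap is in your soundness lemma, and it is the part you yourself flag as the main obstacle. A game with finitely many rounds parametrized by a formula-dependent ``unfolding depth'' cannot be sound for IFP: the number of iterations needed to reach the inflationary fixed point grows with the size of the structure, not with the formula, so no such parameter exists. Moreover, your proposed repair --- that $f$ is a partial isomorphism between $(\mathscr A_N,S)$ and $(\mathscr B_N,S')$ and that ``monotone iteration of the stage operator preserves this correspondence on matched tuples'' --- does not go through as stated: the stages of the fixed-point computation are sets of tuples over the \emph{whole} domain, so a correspondence defined only on the active part says nothing about how unmatched tuples enter the stages. The paper sidesteps all of this by never building a game for IFP directly: it uses the standard fact that every IFP formula is equivalent to an $\mathscr L^{s}_{\infty\omega}$ formula, defines the auxiliary logic $\mathcal L$ by closing $\mathscr L^{\omega}_{\infty\omega}$ under outer log-quantifiers, and takes the game $\text{G}^{m,r,k,s}$ to be $m$ relation moves followed by the ordinary (unbounded-duration) $s$-pebble game, whose correctness for $\mathscr L^{s}_{\infty\omega}$ is already known (Proposition \ref{prop:EF}). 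On the expanded empty graphs the duplicator trivially wins that pebble game, which is all that is needed. Your argument becomes correct if you replace the direct induction through the fixed-point operator by this embedding into the infinitary logic; also note that in the paper's syntax the log-quantifiers occur only as an outer prefix, so there is no need to allow relation moves interleaved with pebble moves.
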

IFP's failure was proven via the failure of the infinitary logic $\mathscr L^\omega_{\infty\omega}$. The logic $\mathscr L^s_{\infty\omega}$ is similar to FO, but every formula in $\mathscr L^s_{\infty\omega}$ can have infinite length or infinite quantifier depth but contains at most $s$ variables (free or bounded). Then
\[\mathscr L^\omega_{\infty\omega} = \bigcup_{s \in \mathbb N} \mathscr L^s_{\infty\omega}\]
Readers can turn to Chapter 3 of \cite{ebbinghaus2005finite} for the details. For every single IFP-formula, we can always construct an equivalent $\mathscr L^s_{\infty\omega}$-formula for some $s$. So IFP is a sublogic of $\mathscr L^\omega_{\infty\omega}$. Now we define a new logic $\mathcal L$ (\emph{Beware}! It is not $\mathscr L$!) as follows: for any formula $\phi$, $\psi$ and $\chi$,
\begin{itemize}
\item $\phi \in \mathcal L$ if $\phi \in \mathscr L^\omega_{\infty\omega}$
\item $\exists^{\log^k}X\phi \in \mathcal L$ if $\phi \in \mathcal L$, where $k > 0$ and $X$ is some relation variable. 
\item $\forall^{\log^k}X\phi \in \mathcal L$ if $\phi \in \mathcal L$, where $k > 0$ and $X$ is some relation variable. 
\item $\psi \land \chi \in \mathcal L$ if $\psi \in \mathcal L$ and $\chi \in \mathcal L$
\item $\psi \lor \chi \in \mathcal L$ if $\psi \in \mathcal L$ and $\chi \in \mathcal L$
\end{itemize}
Obviously $\exists^{\log^\omega}$IFP is a sublogic of $\mathcal L$

In order to prove theorem \ref{thm: Failure}, we turn to the game method
\begin{defn}
$\mathscr L$ is any logic. G is a game played by two players, the spoiler and the duplicator, on two structures. we say G is an Ehrenfeucht-Fra\"{i}ss\'{e} game for $\mathscr L$, if for any $\tau$, any $\mathscr A$ and $\mathscr B \in \text{STRUC}[\tau]$, the following are equivalent,
\begin{enumerate}
\item $\mathscr A \equiv^{\mathscr L} \mathscr B$
\item the duplicator wins G($\mathscr A, \mathscr B$)

\end{enumerate}
where ``$\mathscr A \equiv^{\mathscr L} \mathscr B$'' means for any $\mathscr L[\tau]$-sentence $\phi$,
$\mathscr A \vDash \phi$ if and only if $\mathscr B \vDash \phi$.
\end{defn}
For convenience, we use the notation ``$\bar a$'', a lowercase letter with a bar to represent a ordered set of elements and ``$\bar R$'', a capital letter with a bar to represent a ordered set of relations. Please note that $\bar a$ is not tuple $\vec a$. In the following context we will denote $\bar aa = \bar a \cup\{a\}$, $\bar RR = \bar R \cup \{R\}$. If $\vec a$ consists of elements in $\bar a$, we simply say $\vec a$ is \emph{from} $\bar a$. We say $\bar a \mapsto \bar b \in $ Part($\mathscr A, \bar P, \mathscr B, \bar Q$), i.e. $\bar a \mapsto \bar b$ is a \emph{partial isomorphism} from $\langle \mathscr A, \bar R\rangle$ to $\langle \mathscr B, \bar S \rangle$, where $\bar R = \{R_1, \dotsc R_l\}$ and $\bar S = \{S_1, \dotsc S_l\}$, that is, there is a bijection $f$ from $\bar a$ to $\bar b$,
\begin{enumerate}
\item $f(a_i) = b_i$, $a_i \in \bar a$, $b_i \in \bar b$,
\item for any relation $P \in \tau$, and any tuple $\vec t$ from $\bar a$,
\[\vec t \in P^{\mathscr A} \Longleftrightarrow f(\vec t) \in P^{\mathscr B}\]
\item for $1 \leq i \leq l$, and any tuple $\vec t$ from $\bar a$,
\[\vec t\in R_i \Longleftrightarrow f(\vec t)\in S_i\]
\end{enumerate}
In the expansions, actually $\bar R$, $\bar S$ act as new relations.

The Ehrenfeucht-Fra\"{i}ss\'{e} game for $\mathscr L^s_{\infty\omega}$ is the pebble game with $s$ pairs of pebbles, denoted by PG$^s$. In a play of PG$^s(\mathscr A, \mathscr B)$, there are $s$ or less vertices in each of $\mathscr A$ and $\mathscr B$ covered by pebbles. In each move, each player either does nothing, moves one pebble or adds a new pebble (but on each structures there can be at most $s$ pebbles). If the duplicator can make sure the two covered substructures are always isomorphic, then she wins PG$^s(\mathscr A, \mathscr B)$. For the details readers can turn to Chapter 3 of \cite{ebbinghaus2005finite}.

Now let $\mathcal L^{m, r, k, s}$ be the sublogic of $\mathcal L$ such that for any $\phi$ in it,
\begin{itemize}
\item $lqr(\phi)\leq  m$,
\item $mva(\phi)\leq r$,
\item $height(\phi) \leq k$.
\item at most $s$ element variables occur in $\phi$

\end{itemize}Let's design a game $\text{G}^{m, r, k, s}$ for $\mathcal L^{m, r, k, s}$. As $\mathcal L^{m, r, k, s}$ is extended from $\mathscr L^s_{\infty\omega}$ with log-quantifiers in the ``outer layers'', $\text{G}^{m, r, k, s}$ consists of at most $m$ \emph{relation moves} and a game PG$^s$. The players plays a relation move as follows. The spoiler chooses $r^\prime \leq r$ and $k^\prime \leq k$. Then she chooses either $\mathscr A$ or $\mathscr B$. (W.l.o.g. we assume the spoiler chooses $\mathscr A$. Otherwise $\mathscr A$ and $\mathscr B$ are exchanged.) Then she chooses $R \subseteq A^{r^\prime}$ with $|R|\leq \log^{k^\prime}(|A|)$. At last the duplicator chooses $S\subseteq B^{r^\prime}$ with $|S|\leq \log^{k^\prime}(|B|)$.

In a play of G$^{m, r, k, s}(\mathscr A, \mathscr B)$, the spoiler first chooses an arbitrary $m^\prime \leq m$ and they play $m^\prime$ relation moves and then the two structures are expanded as $\langle \mathscr A, R_1, \dotsc R_{m^\prime}\rangle$ and $\langle \mathscr B, S_1, \dotsc S_{m^\prime}\rangle$. Then they play 
$\text{PG}^s(\langle \mathscr A, R_1, \dotsc R_{m^\prime}\rangle, \langle \mathscr B, S_1, \dotsc S_{m^\prime}\rangle)$. Once this pebble game begins, no more relation moves are allowed. If the duplicator wins $\text{PG}^s(\langle \mathscr A, R_1, \dotsc R_{m^\prime}\rangle,$ $  \langle \mathscr B, S_1, \dotsc S_{m^\prime}\rangle)$, she wins the play. 

If she can always win every play, we say she wins (or she has a winning strategy of) G$^{m, r, k, s}(\mathscr A, \mathscr B)$.
\begin{prop}
\label{prop:EF}
For $m \geq 0$, $r, k, s > 0$, $\text{G}^{m, r, k, s}$ is an Ehrenfeucht-Fra\"{i}ss\'{e} game for $\mathcal L^{m, r, k, s}$.
\end{prop}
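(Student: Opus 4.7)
The plan is to prove the proposition by induction on the log-quantifier budget $m$. The base case $m=0$ is immediate: $\mathcal L^{0,r,k,s}$ is exactly $\mathscr L^s_{\infty\omega}$ and $\text{G}^{0,r,k,s}$ is exactly $\text{PG}^s$, so the statement reduces to the classical Ehrenfeucht-Fra\"{i}ss\'{e} characterization of $\mathscr L^s_{\infty\omega}$ by the $s$-pebble game cited from Chapter 3 of \cite{ebbinghaus2005finite}. For the inductive step I would prove the two directions separately, using the natural correspondence between a log-quantifier $\exists^{\log^{k'}} X$ with $arity(X)=r'$ and a relation move where the spoiler picks $R\subseteq A^{r'}$ of size at most $\log^{k'}(|A|)$.

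For the direction ``duplicator wins $\Rightarrow$ $\mathcal L^{m,r,k,s}$-equivalence'', I would do structural induction on $\phi\in\mathcal L^{m,r,k,s}$. The $\land$ and $\lor$ cases are transparent; the $\mathscr L^\omega_{\infty\omega}$ base case follows by letting the spoiler choose $m'=0$ and reduce to $\text{PG}^s$. For $\phi=\exists^{\log^{k'}}X\psi$, if $\mathscr A\vDash\phi$ via a witness $R$, then the spoiler opens with the relation move $(r',k',\mathscr A,R)$; duplicator's winning strategy returns $S$ on $\mathscr B$ and continues to win $\text{G}^{m-1,r,k,s}(\langle\mathscr A,R\rangle,\langle\mathscr B,S\rangle)$. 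Regarding $\psi$ as a formula of $\mathcal L^{m-1,r,k,s}$ in the expanded signature, the inductive hypothesis yields $\mathscr B\vDash\psi[X/S]$ and hence $\mathscr B\vDash\phi$. The $\forall^{\log^{k'}}$ case is dual via the fact that the spoiler may choose either structure in a relation move.

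For the converse direction I would construct, for every expansion $\langle\mathscr A,\bar R\rangle$ and every remaining budget $m''\le m$, a Hintikka-style formula $\chi^{m''}_{\mathscr A,\bar R}\in\mathcal L^{m'',r,k,s}$ that holds in precisely those expansions $\mathcal L^{m'',r,k,s}$-equivalent to $\langle\mathscr A,\bar R\rangle$. At level $0$ this is the usual $s$-variable Scott-type formula inside $\mathscr L^s_{\infty\omega}$; at level $m''+1$ one forms $\chi^{m''+1}_{\mathscr A,\bar R}$ as an infinitary conjunction, over each pair $(r',k')$ with $r'\le r$, $k'\le k$, of formulas of the shape $\exists^{\log^{k'}}X\,\chi^{m''}_{\langle\mathscr A,\bar R,R\rangle}$ (ensuring $\mathscr A$'s existential choices can be matched in $\mathscr B$) and $\forall^{\log^{k'}}X\bigvee_R\chi^{m''}_{\langle\mathscr A,\bar R,R\rangle}$ (ensuring $\mathscr B$'s choices can be matched back). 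The $\mathscr L^\omega_{\infty\omega}$-base provides infinitary connectives within $\mathcal L$, so these formulas are well-formed. Duplicator's strategy is then: keep the invariant that the current pair of expansions satisfy the same $\chi^{m-m'}$; when spoiler plays a new relation move, the invariant itself guarantees a matching response exists.

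The main obstacle is the Hintikka construction: one must check that, at each level of recursion, the collection of realized $\mathcal L^{m'',r,k,s}$-types over finite structures with $\le m''$ additional relations of arity $\le r$ is a set, so the infinitary conjunctions and disjunctions appearing in $\chi^{m''+1}_{\mathscr A,\bar R}$ yield legitimate formulas of $\mathcal L$, and that the resulting formula satisfies $lqr\le m''+1$, $mva\le r$, $height\le k$, and uses only the $s$ element variables inherited from the inner $\mathscr L^s_{\infty\omega}$ layer. Once this bookkeeping is in place, the back-and-forth argument is routine and the proposition follows by induction on $m$.
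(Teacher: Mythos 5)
Your proposal is correct and follows essentially the same route as the paper: the ``equivalence implies duplicator wins'' direction via recursively defined Hintikka/isotype formulas $\chi^{m''}_{\mathscr A,\bar R}$ (the paper's $\phi^{m,r,k,s}_{\mathscr A,\bar R}$, built from exactly the same $\exists^{\log^{k'}}$-conjunction and $\forall^{\log^{k'}}$-disjunction pattern), and the converse by letting the spoiler track the log-quantifier prefix of a distinguishing sentence. The only organizational difference is that you handle the latter direction by structural induction on formulas (thereby avoiding the paper's ``w.l.o.g.\ prenex'' step, which is slightly delicate since prenexing a conjunction can increase $lqr$), which is a minor but harmless improvement in bookkeeping.
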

\begin{proof}
Let $\mathscr A$ and $\mathscr B$ be two structures over a given signature $\tau$.

We construct the isotype of $\mathscr A$. Let $\bar R$ be a set of new relations such that for any $R\in \bar R$, $arity(R)\leq r$ (and $|R| \leq \log^k(|A|)$). 
\[\phi^{0, r, k, s}_{\mathscr A, \bar R}(\bar X) = \bigwedge \{\phi (\bar X) \mid \phi \text{ is an sentence of $\mathscr L^s_{\infty\omega}[\tau \cup \bar X]$ such that }\mathscr A \vDash \phi[\frac{\bar X}{\bar R}]\}\]
then inductively,
\begin{align*}
\phi^{m+1, r, k, s}_{\mathscr A, \bar R}(\bar X) = \bigwedge_{i \leq r}\bigwedge_{j \leq k}&[(\bigwedge_{R \subseteq A^i, |R| \leq log^j(|A|)}\exists^{\log^j}X\phi^{m, r, k, s}_{\mathscr A, \bar RR}(\bar XX))\\
&\land (\forall^{\log^j}X\bigvee_{R \subseteq A^i, |R| \leq log^j(|A|)}\phi^{m, r, k, s}_{\mathscr A, \bar RR}(\bar XX))]
\end{align*}
When $\bar R = \varnothing$, we simply write $\phi^{m, r, k, s}_{\mathscr A}$, which is a sentence of $\mathcal L^{m, r, k, s}$.

Suppose $\mathscr A \equiv^{\mathcal L^{m, r, k, s}} \mathscr B$, then $\mathscr B \vDash \phi^{m, r, k, s}_{\mathscr A}$. The isotype indicates a winning strategy for the duplicator. After $m$ moves if the two structures are expanded as $\langle \mathscr A, \bar R\rangle$ and $\langle \mathscr B, \bar S\rangle$,
\[\mathscr B \vDash \phi^{0, r, k, s}_{\mathscr A, \bar R}[\frac{\bar X}{\bar S}]\]
This means $\langle \mathscr A, \bar R\rangle$ and $\langle \mathscr B, \bar S\rangle$ satisfy the same $\mathscr L^s_{\infty\omega}$-formulas. Therefore the duplicator can win PG$^s(\langle \mathscr A, \bar R\rangle,\langle \mathscr B, \bar S\rangle)$. Then she wins G$^{m, r, k, s}(\mathscr A, \mathscr B)$.

Suppose $\mathscr A \not\equiv^{\mathcal L^{m, r, k, s}} \mathscr B$. There is a sentence $\phi$ of $\mathcal L^{m, r, k, s}$ which $\mathscr A$ and $\mathscr B$ disagree on.  W.l.o.g., we assume that
$\mathscr A \vDash \phi$ and $\mathscr B \nvDash \phi$ and
\[\phi = Q_1^{\log^{k_1}}X_1\dotsc Q_m^{\log^{k_m}}X_m \psi\]
where $\psi$ is an $\mathscr L^s_{\infty\omega}$-sentence and $k_1, \dotsc k_m\leq k$ and $Q_1, \dotsc Q_m \in \{\exists, \forall\}$. Then
\begin{itemize}
\item $\mathscr A \vDash Q_1^{\log^{k_1}}X_1\dotsc Q_m^{\log^{k_m}}X_m \psi$
\item $\mathscr B \vDash \hat Q_1^{\log^{k_1}}X_1\dotsc \hat Q_m^{\log^{k_m}}X_m \neg\psi$

\end{itemize}
(if $Q_i = \exists$, then $\hat Q_i = \forall$; if $Q_i = \forall$, then $\hat Q_i = \exists$, $1\leq i\leq m$.) This provides a winning strategy for the spoiler. In the $i$-th relation move if $Q_i = \exists$ then the spoiler should choose $\mathscr A$ and the relation $R_i \subseteq A^{arity(X_i)}$; otherwise she should choose $\mathscr B$ and the relation $S_i \subseteq B^{arity(X_i)}$. After $m$ relation moves, the structures have been expanded as $\langle\mathscr A, R_1, \dotsc R_m \rangle$ and $\langle\mathscr B, S_1, \dotsc S_m \rangle$.
\begin{itemize}
\item $\langle\mathscr A, R_1, \dotsc R_m \rangle \vDash \psi[\frac{X_1}{R_1}, \dotsc \frac{X_m}{R_m}]$
\item $\langle\mathscr B, S_1, \dotsc S_m \rangle \vDash \neg\psi[\frac{X_1}{S_1}, \dotsc \frac{X_m}{S_m}]$

\end{itemize}
The duplicator cannot win PG$^s(\langle \mathscr A, \bar R\rangle,\langle \mathscr B, \bar S\rangle)$. So she cannot win G$^{m, r, k, s}(\mathscr A, \mathscr B)$. 
\end{proof}

For any $\mathscr A \in $ STRUC[$\tau$], $R\subseteq A^{arity(R)}$ and $a \in A$, we say $R$ \emph{mentions} $a$ (or $a$ is \emph{mentioned} by $R$), if $a$ is a component of some tuple $\vec t \in R$. Let $ment(R) = \{a\in A \mid a \text{ is mentioned by }R\}$. Observe that if $R$ is bounded by log-quantifier $\exists^{\log^k}$,  then
\[|ment(R)| \leq arity(R)\cdot \log^k(|A|)\]
and we denote $ment(\bar R) = \bigcup_{R\in \bar R} ment(R)$

\begin{thm}\label{thm: EVEN FAILURE}
EVEN is not definable in $\mathcal L$.
\end{thm}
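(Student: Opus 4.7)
The plan is to apply Proposition~\ref{prop:EF}. Since every $\mathcal L$-sentence lies in some $\mathcal L^{m,r,k,s}$, it suffices to exhibit, for each tuple $(m,r,k,s)$ with $r,k,s\ge1$, an even-sized structure $\mathscr A\in$ EVEN and an odd-sized structure $\mathscr B\notin$ EVEN such that the duplicator wins $\text{G}^{m,r,k,s}(\mathscr A,\mathscr B)$. I would take $\mathscr A$ and $\mathscr B$ to be \emph{edgeless} graphs on $2N$ and $2N+1$ vertices respectively, with $N$ chosen so large that (i) $2N$ is not a power of two, so that $\log(|A|)=\log(|B|)$ under the ceiling convention fixed in the Preliminaries, and (ii) $2N-mr\log^k(2N+1)-s\ge 0$. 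Condition (ii) is satisfiable because $\log^k$ grows sublinearly; it will be the only quantitative requirement.

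During the relation phase the duplicator builds a partial bijection $\pi\colon ment(\bar R)\to ment(\bar S)$ that is an isomorphism of the running expansions $\langle\mathscr A,\bar R\rangle$ and $\langle\mathscr B,\bar S\rangle$. When the spoiler fixes $r'\le r$, $k'\le k$ and plays $R\subseteq A^{r'}$ with $|R|\le\log^{k'}(|A|)$ on the $\mathscr A$-side, she extends $\pi$ by choosing an arbitrary injection from $ment(R)\setminus\mathrm{dom}(\pi)$ into $B\setminus\mathrm{range}(\pi)$ and replies with $S:=\pi(R)$. Because $\log(|A|)=\log(|B|)$, the size bound $|S|=|R|\le\log^{k'}(|B|)$ is legal; the $\mathscr B$-side is handled symmetrically. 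Since $|ment(R)|\le r'\cdot|R|\le r\log^k(|A|)$, the mentioned sets remain bounded by $mr\log^k(2N+1)$ after all $m$ moves, and condition (ii) guarantees enough fresh codomain elements at each extension step.

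After $m'\le m$ relation moves the duplicator enters $\text{PG}^s$ on $\langle\mathscr A,\bar R\rangle$ and $\langle\mathscr B,\bar S\rangle$, already holding $\pi$ as a partial isomorphism between the mentioned substructures. Because the underlying graphs are edgeless, unmentioned vertices carry no structure beyond equality and are freely interchangeable. Her pebble rule is therefore: to a spoiler move onto $a\in A$, reply $\pi(a)$ if $a\in ment(\bar R)$, and otherwise any unpebbled element of $B\setminus ment(\bar S)$; dually on the $\mathscr B$-side. The resulting correspondence on pebbled vertices is always a partial isomorphism, so the duplicator wins $\text{PG}^s$ and hence $\text{G}^{m,r,k,s}$.

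The main (and essentially only) obstacle is this quantitative bookkeeping: one has to ensure simultaneously that the log-quantifier bound on $\mathscr B$ is respected by $\pi(R)$, which is why I choose $|A|$ and $|B|$ with the same binary length, and that at least $s$ unmentioned vertices survive on each side to support the pebble phase, which is the content of condition~(ii). Both requirements are handled by the choice of $N$ described above. Applying Proposition~\ref{prop:EF} gives $\mathscr A\equiv^{\mathcal L^{m,r,k,s}}\mathscr B$ for every parameter tuple, so no sentence of $\mathcal L$ can separate even-sized from odd-sized edgeless graphs, proving that EVEN is not definable in $\mathcal L$.
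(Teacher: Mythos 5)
Your proposal is correct and follows essentially the same route as the paper's proof: two large edgeless graphs of consecutive sizes with equal $\log$-value, a partial isomorphism on the mentioned elements maintained through the relation moves with the reply $S=\pi(R)$, and a win in $\text{PG}^s$ because unmentioned vertices are interchangeable isolated nodes. Your explicit condition that $|A|$ not be a power of two (to guarantee $\log(|A|)=\log(|B|)$ under the ceiling convention) makes precise a point the paper merely asserts.
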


\begin{proof}
If EVEN is defined by a sentence $\phi$ of $\mathcal L[\{E\}]$, $\phi$ should also work on empty graphs, namely on the graphs that have no edges. Now we assume $E = \varnothing$ in order to get a contradiction. There are $m \geq 0$ and $r, k, s>0$ such that $\phi \in \mathcal L^{m, r, k, s}[\{E\}]$. Let $\mathscr A$ and $\mathscr B$ be two empty graphs such that $|A|$ is a sufficiently large even number satisfying
\[(m+1)\cdot r\cdot s \cdot \log^k(|A|) < |A|\] 
and $|B| = |A| + 1$ and $\log(|A|) = \log(|B|)$. So $\mathscr A\vDash \phi$ and $\mathscr B\nvDash \phi$. The duplicator can play $\text{G}^{m, r, k, s}(\mathscr A, \mathscr B)$ as follows:

Before this play begins, vacuously $\varnothing \mapsto \varnothing \in\text{Part}(\mathscr A, \mathscr B)$. Let $f: \varnothing \mapsto \varnothing$. Suppose after $i$ moves ($0\leq i < m$), the players have $\langle\mathscr A, \bar R\rangle$ and $\langle\mathscr B, \bar S\rangle$ and $f$ has been extended as $ment(\bar R) \mapsto ment(\bar S)$. In the $(i+1)$-th move, w.l.o.g. the spoiler chooses $R \subseteq A^{r_{i+1}}$ and $|R|\leq \log^{k_{i+1}}(|A|)$, where $r_{i+1} \leq r$ and $k_{i+1}\leq k$. For $a \in ment(R) - ment(\bar R)$, the duplicator can casually choose $b \notin ment(\bar S)$ and extend $f$ with $f(a) = b$. Since 
\[|ment(\bar S)| \leq m\cdot r \cdot \log^k(|B|)\]
which is much smaller than $|B|$, there are enough ``unmentioned'' $b$'s to choose to make $f$ a partial isomorphism. Let
\[S = f(R) = \{(f(t_1), f(t_2), \dotsc f(t_{r_{i+1}}))\mid (t_1, t_2, \dotsc t_{r_{i+1}})\in R\}\]
So the duplicator chooses $S$. the structures are expanded as $\langle \mathscr A, \bar RR\rangle$ and $\langle \mathscr B, \bar SS\rangle$

After $m$ moves, $\mathscr A$ and $\mathscr B$ are expanded as $\langle \mathscr A, R_1, \dotsc R_m \rangle$ and $\langle \mathscr B, S_1, \dotsc S_m \rangle$ which we still denote by $\langle\mathscr A, \bar R\rangle$ and $\langle\mathscr B, \bar S\rangle$ for short. Consider the substructures
\[\langle ment(\bar R), \bar R\rangle \simeq \langle ment(\bar S), \bar S\rangle\]
The other elements which aren't in the substructures are all isolated nodes. One can easily check that the duplicator wins $\text{PG}^s(\mathscr A, \bar R, \mathscr B, \bar S)$.

So the duplicator wins $\text{G}^{m, r, k, s}(\mathscr A, \mathscr B)$. By proposition \ref{prop:EF}, $\mathscr A \equiv^{\mathcal L^{m, r, k, s}} \mathscr B$. That is a contradiction.

So EVEN is not definable in $\mathcal L$. 
\end{proof}
Since $\exists^{\log^\omega}$IFP is a sublogic of $\mathcal L$, EVEN is not definable in $\exists^{\log^\omega}$IFP, either. Hence $\exists^{\log^\omega}$IFP does not capture $\beta$P (on all finite structures).

\section{Furthur Discussion}
Readers might have noticed that the results can be extended onto other complexity classes. For example the existential and universal log-quantifiers can alternate several times in the formula so as to capture a corresponding \emph{limited alternation class}. Furthermore, not only log-quantifiers, we can also consider other second-order quantifier with a bound of cardinality. Let $f$ be a sublinear function on $\mathbb N$. One can easily prove on ordered structures a logic “$\exists^f$IFP” can capture $\beta_{(f\cdot \log)}$, i.e. the complexity class $GC(f(n)\cdot\log(n), \text{P})$, where the parameter ``$\log(n)$'' seems unavoidable. However none of the above can capture the corresponding complexity classes without a linear order. The proofs could be analogous to our theorem \ref{thm: EVEN FAILURE}.

We are not sure 
\begin{itemize}
\item on what \emph{natural} class of graphs, $\exists^{\log^\omega}$IFP can capture $\beta$P while IFP cannot capture P;

\item whether there is a problem in P which $\exists^{\log^\omega}$IFP can define while IFP cannot.
\end{itemize}
These questions could be interesting.
\bibliography{CLN}

\begin{thebibliography}{Imm82}

\bibitem[BL83]{babai1983canonical}
L{\'a}szl{\'o} Babai and Eugene~M Luks.
\newblock Canonical labeling of graphs.
\newblock In {\em Proceedings of the fifteenth annual ACM symposium on Theory
  of computing}, pages 171--183. ACM, 1983.

\bibitem[Bod90]{bodlaender1990polynomial}
Hans~L Bodlaender.
\newblock Polynomial algorithms for graph isomorphism and chromatic index on
  partial k-trees.
\newblock {\em Journal of Algorithms}, 11(4):631--643, 1990.

\bibitem[CC97]{cai1997amount}
Liming Cai and Jianer Chen.
\newblock On the amount of nondeterminism and the power of verifying.
\newblock {\em SIAM Journal on Computing}, 26(3):733--750, 1997.

\bibitem[EF05]{ebbinghaus2005finite}
Heinz-Dieter Ebbinghaus and J{\"o}rg Flum.
\newblock {\em Finite model theory}.
\newblock Springer Science \& Business Media, 2005.

\bibitem[Fag74]{fagin1974generalized}
Ronald Fagin.
\newblock Generalized first-order spectra and polynomial-time recognizable
  sets.
\newblock 1974.

\bibitem[GM99]{Grohe1999}
Martin Grohe and Julian Mari{\~n}o.
\newblock Definability and descriptive complexity on databases of bounded
  tree-width.
\newblock In {\em International Conference on Database Theory}, pages 70--82.
  Springer, 1999.

\bibitem[GN19]{grohe2019canonisation}
Martin Grohe and Daniel Neuen.
\newblock Canonisation and definability for graphs of bounded rank width.
\newblock {\em arXiv preprint arXiv:1901.10330}, 2019.

\bibitem[Gro98]{grohe1998fixed}
Martin Grohe.
\newblock Fixed-point logics on planar graphs.
\newblock In {\em Logic in Computer Science, 1998. Proceedings. Thirteenth
  Annual IEEE Symposium on}, pages 6--15. IEEE, 1998.

\bibitem[Gro08]{grohe2008quest}
Martin Grohe.
\newblock The quest for a logic capturing ptime.
\newblock In {\em Logic in Computer Science, 2008. LICS'08. 23rd Annual IEEE
  Symposium on}, pages 267--271. IEEE, 2008.

\bibitem[Gro11]{grohe2011polynomial}
Martin Grohe.
\newblock From polynomial time queries to graph structure theory.
\newblock {\em Communications of the ACM}, 54(6):104--112, 2011.

\bibitem[IL90]{immerman1990describing}
Neil Immerman and Eric Lander.
\newblock Describing graphs: A first-order approach to graph canonization.
\newblock {\em Complexity theory retrospective}, 1, 1990.

\bibitem[Imm82]{immerman1982relational}
Neil Immerman.
\newblock Relational queries computable in polynomial time.
\newblock In {\em Proceedings of the fourteenth annual ACM symposium on Theory
  of computing}, pages 147--152. ACM, 1982.

\bibitem[KF84]{kintala1984refining}
Chandra Kintala and P~Fischer.
\newblock Refining nondeterminism in relativized complexity classes.
\newblock {\em SIAM Journal on Computing}, 13(329-337):4, 1984.

\bibitem[K{\"o}b06]{kobler2006graph}
Johannes K{\"o}bler.
\newblock On graph isomorphism for restricted graph classes.
\newblock In {\em Conference on Computability in Europe}, pages 241--256.
  Springer, 2006.

\bibitem[Lib13]{libkin2013elements}
Leonid Libkin.
\newblock {\em Elements of finite model theory}.
\newblock Springer Science \& Business Media, 2013.

\bibitem[Var82]{vardi1982complexity}
Moshe~Y Vardi.
\newblock The complexity of relational query languages.
\newblock In {\em Proceedings of the fourteenth annual ACM symposium on Theory
  of computing}, pages 137--146. ACM, 1982.

\end{thebibliography}
\end{document}